\theoremstyle{definition}
\newtheorem{theorem}{Theorem}[section]
\newtheorem{problem}[theorem]{Problem}
\newtheorem{proposition}[theorem]{Proposition}
\newtheorem{definition}[theorem]{Definition}
\newtheorem{remark}[theorem]{\rm Remark}
\newtheorem{example}[theorem]{Example}
\numberwithin{equation}{section}
\newcommand{\Z}{\mathbb{Z}}
\title{Dappled tiling} 
\author{Shizuo Kaji}
\thanks{The first author was partially supported by KAKENHI, Grant-in-Aid for Young 
     Scientists (B) 26800043.}
\address{Yamaguchi University, Japan / JST PRESTO}
\email{skaji@yamaguchi-u.ac.jp}
\author{Alexandre Derouet-Jourdan}
\address{OLM Digital Inc.}
\email{alex@olm.co.jp}
\author{Hiroyuki Ochiai}
\address{Kyushu University}
\email{ochiai@imi.kyushu-u.ac.jp}
\thanks{The third author was partially supported by KAKENHI Grant Number 15H03613.}
\keywords{tiling algorithm, texture generation, Brick Wang tiles}
\subjclass[2010]{52C20, 68U05}
\begin{document}

\maketitle

\begin{abstract}
We consider a certain tiling problem of a planar region
 in which there are no long horizontal or vertical strips consisting of 
 copies of the same tile. Intuitively speaking, we 
 would like to create a dappled pattern with 
 two or more kinds of tiles.
We give an efficient algorithm to turn any tiling into one satisfying the condition,
and discuss its applications in texturing.
\end{abstract}

\section{Introduction}
In texturing, we often encounter the following problem:
fill a region with a given collection of small square patches in such a way that
patches of a same kind do not appear in a row.
We make this problem more precise.

\begin{definition}
For natural numbers $m$ and $n$, let $G_{m,n}$ be a rectangular grid
\[
G_{m,n}=\{ (i,j)\in \Z\times \Z \mid 0\le i < m, 0\le j< n \}.
\]
We call its elements {\em cells}.
Our convention is that $(0,0)$ is the cell at the top-left corner and $(m-1,0)$ is at the top-right corner.
For a finite set of tiles $T$ with $\# T\ge 2$, 
we call a function $G_{m,n}\to T$ a {\em tiling} of $G_{m,n}$ with $T$.
For a natural number $p>1$ and $t\in T$, we say $f$ satisfies the condition $H^p_t$
if there is no horizontal strip with more than $p$ consecutive $t$'s, that is, there is no $(i,j)\in G_{m,n}$
such that $f(i,j)=f(i-1,j)=\cdots=f(i-p,j)=t$.
Similarly, we say $f$ satisfies the condition $V^q_t$  for a natural number $q>1$,
if there is no vertical strip with more than $q$ consecutive $t$'s.
\end{definition}

Consider a set $L$ consisting of conditions of the form
$H^{p}_t$ and $V^{q}_t$ with varying $p,q>1$ and $t\in T$.
Alternatively, 
we can think of $p,q$ as functions
$p,q: T \to \{2,3,4,\ldots,\}\cup \{\infty\}$
so that $L=\bigcup_{t\in T} \{H^{p(t)}_t, V^{q(t)}_t\}$.
For notational convenience, we allow $H^\infty_t$, which is always satisfied.
We will use both notations interchangeably.
We say a tiling $f$ is $L$-\emph{dappled} if $f$ satisfies all the conditions in $L$.
The problem we are concerned is:
\begin{problem}
Give an efficient algorithm to produce
$L$-dappled tilings, 
which posses some controllability by the user.
\end{problem}
In this paper, we introduce an algorithm to produce an $L$-dappled tiling by 
modifying a given initial tiling which need not be $L$-dappled.
Note that enumerating all the $L$-dappled tilings is fairly straightforward;
we can fill cells sequentially from the top-left corner.
However, this is not practical since there are exponentially many $L$-dappled tilings with respect to 
the number of cells,
and many of them are not suitable for applications
as we see below.
\begin{proposition}
Let $N=\lceil \frac{m}{2}\rceil \lceil\frac{n}{2}\rceil$.
There exist at least $|T|^N$ tilings which are $L$-dappled.
\end{proposition}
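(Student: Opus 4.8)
The plan is to exhibit an explicit injection from $T^{N}$, the set of all $T$-valued functions on the $N$ cells having both coordinates even, into the set of $L$-dappled tilings; this immediately yields at least $|T|^{N}$ of them. First I would reduce to the hardest case. A tiling in which no row or column ever contains three consecutive equal tiles satisfies $H^{p}_{t}$ and $V^{q}_{t}$ for every $t$ and every $p,q\ge 2$ (and trivially $H^{\infty}_{t}$), since forbidding runs of length $\ge 3$ is more restrictive than forbidding runs of length $\ge p+1\ge 3$. Hence it suffices to produce $|T|^{N}$ distinct tilings obeying the single strongest set of constraints $p(t)=q(t)=2$ for all $t$; these are automatically $L$-dappled for every admissible $L$.

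Next I would set up the construction. Call a cell $(i,j)$ \emph{free} if both $i$ and $j$ are even; there are exactly $\lceil m/2\rceil\lceil n/2\rceil=N$ free cells, and I let the user assign to them arbitrary values $f(2a,2b)=t_{a,b}\in T$. Fix once and for all a fixed-point-free map $\sigma\colon T\to T$ (for instance a cyclic shift, which exists because $|T|\ge 2$), so that $\sigma(v)\ne v$, and hence also $\sigma(\sigma(v))\ne\sigma(v)$, for all $v$. I then extend $f$ to the remaining three parity classes by the purely local rules $f(2a+1,2b)=\sigma(t_{a,b})$, $f(2a,2b+1)=\sigma(t_{a,b})$, and $f(2a+1,2b+1)=\sigma(\sigma(t_{a,b}))$. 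Every reference is to a free cell at a smaller-or-equal even coordinate, so the rule is well defined up to the boundary for any $m,n$.

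Finally I would verify that no row or column carries a run of three. Reading an even row one sees the pattern $t_{0,b},\sigma(t_{0,b}),t_{1,b},\sigma(t_{1,b}),\dots$, and reading an odd row one sees $\sigma(t_{0,b}),\sigma(\sigma(t_{0,b})),\sigma(t_{1,b}),\dots$; in each case any three consecutive entries contain an adjacent pair that differs, by the fixed-point-freeness of $\sigma$, so no triple is monochromatic, and the columns behave identically by the symmetry of the rules. Injectivity is immediate because $f$ restricted to the free cells returns the chosen $(t_{a,b})$, so distinct assignments give distinct tilings. I expect the only real subtlety to be the case $|T|=2$: there the non-free values are completely forced, the horizontal and vertical constraints must be met simultaneously with no slack, and one must check that the forced pattern never produces a vertical run of three in an odd column. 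This is exactly what the uniform $\sigma$-periodic pattern above guarantees, so it is the step I would write out most carefully.
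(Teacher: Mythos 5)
Your proof is correct and takes essentially the same approach as the paper: both parameterize tilings by a free choice of tile on the $N$ cells with both coordinates even, then complete each $2\times 2$ block so that horizontally and vertically adjacent cells within a block differ, which rules out any monochromatic run of length $3$ and hence satisfies every condition $H^p_t$, $V^q_t$ with $p,q\ge 2$. The only difference is cosmetic: the paper's ``draughtboard'' construction repeats $t$ at the bottom-right cell of each block and allows arbitrary tiles $t',t''\in T\setminus\{t\}$ off the diagonal, whereas you fill the block deterministically with a fixed-point-free map $\sigma$.
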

\begin{proof}
We will create {\em draughtboard} tilings.
For each cell $(2k,2l)$,
choose any tile $t\in T$ and put the same tile at $(2k+1,2l+1)$ (if it exists).
Pick any $t',t''\in T\setminus \{t\}$ and put them at $(2k+1,2l)$ and $(2k,2l+1)$ (if they exist).
One can see that for any $(i,j)\in G_{m,n}$ the tile at $(i-1,j)$ or $(i-2,j)$
is different from the one at $(i,j)$. Similarly,
the tile of $(i,j-1)$ or $(i,j-2)$ is different from the one at $(i,j)$,
and hence, the tiling thus obtained is $L$-dappled with any $L$.
There are $N$ cells of the form $(2k,2l)$, and hence,
there are at least $|T|^N$ draughtboard tilings.
\end{proof}
It is easy to see that the above argument actually shows that 
there are at least $\big(|T|(|T|-1)^2+|T|(|T|-1)(|T|-2)^2 \big)^{N'}$
draughtboard (and hence, $L$-dappled)
tilings with $N'=\lfloor \frac{m}{2}\rfloor \lfloor\frac{n}{2}\rfloor$.
\begin{example}
We show an example of a draughtboard tiling with $T=\{\text{orange, white}\}$ (Fig. \ref{fig:draughtboard}).
For any set of conditions $L$, it is an $L$-dappled tiling.
\begin{figure}[htb]
    \centering
    \includegraphics[width=0.05\linewidth]{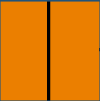} \qquad
    \includegraphics[width=0.05\linewidth]{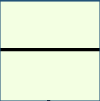} \qquad 
    \includegraphics[width=0.4\linewidth]{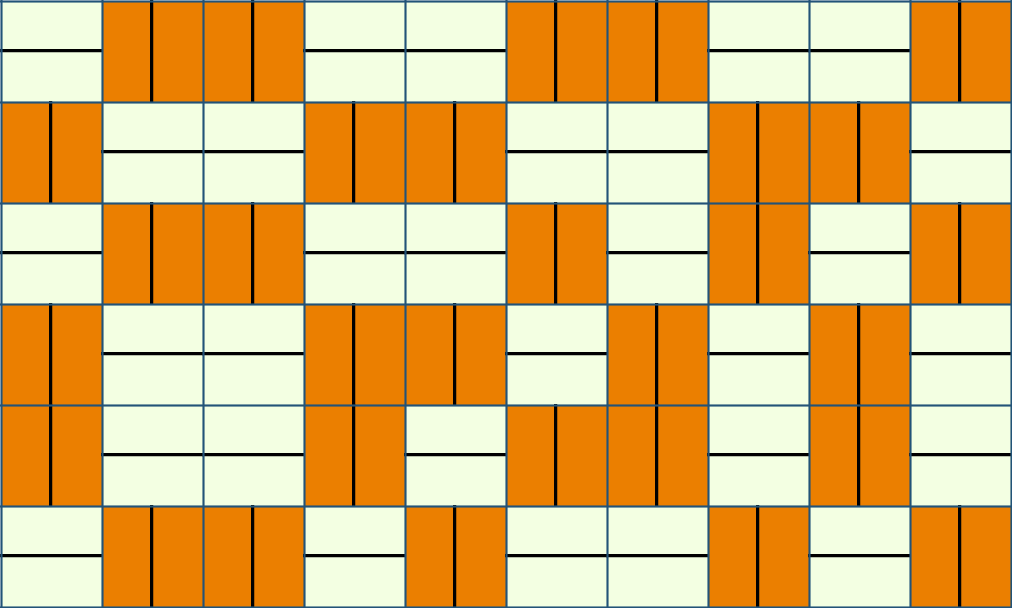} \qquad
    \caption{Orange and White tiles and an example of a draughtboard tiling of $G_{10,6}$}
    \label{fig:draughtboard}
\end{figure}\end{example}
Draughtboard patters look very artificial 
and are not suitable for texturing.
We would like to have something more natural.
Therefore, we consider an algorithm to produce 
an $L$-dappled tiling which takes a user specified (not necessarily $L$-dappled) tiling as input
so that the user has some control over the output.
We also discuss a concrete applications with the Brick Wang tiles (\cite{meis2015,AKM,scss2016}) in \S \ref{sec:brick}, 
and with flow generation in \S \ref{sec:flow}.

\begin{remark}
For the special case of $T=\{0,1\}$ and $\{H^2_0,V^2_1\}$,
the numbers of $L$-dappled tilings
for several small $m$ and $n$ are listed at \cite{A206994}.
No formula for general $m$ and $n$ nor a generating function 
is known as far as the authors are aware.
\end{remark}

\section{The algorithm}
Fix a set of conditions $L$. We just say dappled for $L$-dappled from now on.
Given any tiling $f$, we give an algorithm to convert it into a dappled one.
We can start with a random tiling or a user specified one.
First, note that the problem becomes trivial when $|T|>2$
since we can choose a tile for $f(i,j)$ at step (I) below 
which is different from $f(i-1,i)$ and $f(i,j-1)$.
So, we assume $T$ consists of two elements $\{0, 1\}$.

The idea is to perform ``local surgery'' on $f$.
We say $f$ {\em violates} the condition $H^p_t\in L$ (resp. $V^q_t\in L$) at $(i,j)$
when $f(i,j)=f(i-1,j)=\cdots=f(i-p,j)=t$
(resp. $f(i,j)=f(i,j-1)=\cdots=f(i,j-q)=t$).
For a cell $(i,j)$ we define its {\em weight} $|(i,j)|=i+j$.
Let $(i,j)$ be a cell with the minimum weight
 such that $f$ violates any of the conditions $H^p_t\in L$ or $V^q_t\in L$.
We modify values of $f$ around $(i,j)$ to rectify the violation in the following manner.
\begin{enumerate}
\renewcommand{\labelenumi}{(\Roman{enumi}).}
\item Set $f(i,j)=1-t$ if it does not violate any condition at $(i,j)$ in $L$.
\item Otherwise, set $f(i,j)=f(i-1,j-1), f(i-1,j)=1-f(i-1,j-1)$, and $f(i,j-1)=1-f(i-1,j-1)$.
\end{enumerate}
Let us take a close look at step (II).
Assume that $f$ violated $H^p_t$ at $(i,j)$. This means $f(i-2,j)=f(i-1,j)=f(i,j)=t$.
Note also that $f(i,j-1)=f(i,j-2)=1-t$ since otherwise we could set $f(i,j)=1-t$ at step (I).
When $f(i-1,j-1)=t$, we can set $f(i-1,j)=1-t$ without introducing a new violation at $(i-1,j)$.
When $f(i-1,j-1)=1-t$, we can set $f(i,j)=1-t$ and $f(i,j-1)=t$ without introducing a new violation at either of $(i-1,j)$ or $(i,j-1)$.
A similar argument also holds when $V^q_t$ is violated at $(i,j)$.

After the above procedure,
the violation at $(i,j)$ is resolved without introducing a new violation at
cells with weight $\le i+j$.
(We successfully ``pushed'' the violation forward.)
Notice that each time either the minimal weight of violating cells increases or
the number of violating cells with the minimal weight decreases.
Therefore, by repeating this procedure a finite number of times, 
we are guaranteed to obtain a dappled tiling transformed from the initially given tiling.

The algorithm works in whatever order the cells of a same weight are visited,
but our convention in this paper is in increasing order of $i$.
All the examples are produced using this ordering.

\begin{proposition}
Fix any $m,n>0$, $T=\{0,1\}$, and
$L=\{H^{p(t)}_t, V^{q(t)}_t\mid t\in T\}$
with $p(t), q(t)>1$ for all $t\in T$.
Algorithm \ref{algorithm} takes a tiling
$f: G_{m,n}\to T$ and outputs an $L$-dappled tiling.
If $f$ is already $L$-dappled,
the algorithm outputs $f$.
In other words, Algorithm \ref{algorithm} defines a 
retraction from the set of tilings of $G_{m,n}$ onto that of $L$-dappled tilings of $G_{m,n}$.
\end{proposition}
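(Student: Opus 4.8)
The plan is to extract the local invariant already indicated in the discussion preceding the statement, promote it to a precise lemma, and then run a well-founded descent to obtain termination; the retraction property will then drop out for free.

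I begin by fixing a progress measure. For a tiling $g$ that is not dappled, let $w(g)$ denote the smallest weight of a cell at which $g$ violates some condition of $L$, and let $c(g)$ count the minimal-weight violating cells; for a dappled $g$ I set $w(g)=\infty$ and call $g$ terminal. Since weights lie in $\{0,1,\dots,m+n-2\}$ and $c(g)$ is bounded by the number of cells of a fixed weight, the quantity $\Phi(g)=(m+n)\bigl((m+n)-w(g)\bigr)+c(g)$ takes values in a finite set of nonnegative integers. The goal is to show that a single iteration sends $g$ to a tiling $g'$ with $\Phi(g')<\Phi(g)$, i.e.\ that either $w$ strictly increases, or $w$ is unchanged and $c$ strictly decreases.

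The crux is the following local claim: if $(i,j)$ is the minimal-weight violating cell processed by (I)/(II) and $g'$ is the resulting tiling, then every cell of weight $<i+j$ is non-violating in $g'$, and the set of violating cells of weight $i+j$ in $g'$ is obtained from that in $g$ by deleting $(i,j)$ and adding nothing. To prove it I would first note that (I) alters only $f(i,j)$, while (II) alters only $f(i,j),f(i-1,j),f(i,j-1)$, of weights $i+j$, $i+j-1$, $i+j-1$. A routine enumeration of which strips can contain a modified entry shows that the only cells of weight $\le i+j$ whose violation status can change are $(i,j),(i-1,j),(i,j-1)$ together with the two ``corner'' neighbours $(i-1,j+1)$ and $(i+1,j-1)$, the latter of weight exactly $i+j$. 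That the three directly modified cells do not become violating is precisely the case analysis already carried out for step (II), splitting on whether $f(i-1,j-1)=t$ or $1-t$ and using that step (I) failed (so $f(i,j-1)=f(i,j-2)=1-t$); the argument for a violated $V^q_t$ is symmetric. For the two corner neighbours I would use the hypothesis that every threshold exceeds $1$: a new vertical strip ending at $(i-1,j+1)$ would have to contain both $f(i-1,j)$ and $f(i-1,j-1)$, which now differ, and likewise a new horizontal strip at $(i+1,j-1)$ would contain the now-differing $f(i,j-1)$ and $f(i-1,j-1)$; hence neither corner can become violating. I expect this bookkeeping — covering both the horizontal and the vertical condition at each affected cell, and pinning down exactly where $p,q>1$ is needed — to be the main obstacle.

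Granting the local claim, the descent is immediate. By minimality of $(i,j)$ in $g$, no cell of weight $<i+j$ is violating in $g$, and by the claim none is violating in $g'$ either; at weight $i+j$ the cell $(i,j)$ is removed and nothing is added, so $w$ cannot decrease and $\Phi$ strictly decreases at every iteration. As $\Phi$ is a nonnegative integer, the loop halts after finitely many steps, and it can only halt at a dappled tiling; this proves the algorithm outputs an $L$-dappled tiling. Finally, if the input $f$ is already dappled then $w(f)=\infty$, the main loop is never entered, and the algorithm returns $f$ unchanged. Thus the algorithmically defined map from all tilings of $G_{m,n}$ onto the $L$-dappled tilings restricts to the identity on the $L$-dappled tilings, i.e.\ it is a retraction, as claimed.
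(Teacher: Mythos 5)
Your proof is correct and follows essentially the same route as the paper: locate the minimal-weight violating cell, run the same case analysis on $f(i-1,j-1)$ (using that step (I) failed, so $f(i,j-1)=f(i,j-2)=1-t$) to show steps (I)/(II) introduce no new violation at cells of weight $\le i+j$, and conclude termination from the lexicographic decrease of (minimal violating weight, number of minimal-weight violating cells), with the retraction property immediate. Your explicit check of the corner cells $(i-1,j+1)$ and $(i+1,j-1)$ — the one place the hypothesis $p,q>1$ is genuinely needed — makes rigorous a point the paper asserts but does not spell out, and your potential function $\Phi$ is just a formalization of the paper's own termination measure.
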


\begin{algorithm}[H]
\SetKw{KwCont}{continue}
\SetKw{KwBreak}{break}
\SetKw{KwAnd}{and}
\SetKw{KwNot}{not}
\SetKwFunction{TestLeft}{Violate}
\SetKwProg{Fn}{Function}{}{}
\SetAlgoLined
 \KwIn{A tiling $f:G_{m,n}\to T$, a set of conditions $L$}
 \KwOut{An $L$-dappled tiling $g: G_{m,n}\to T$}
 (note that in the below the values of $f$ and $g$ for negative indices should be understood appropriately)
 
\Begin{
 $g \gets f$ \;
 \For{$weight=0$ \KwTo $m+n-2$}{
	 \ForAll{$(i,j)\in G_{m,n}$ such that $i+j= weight$}{
	 	\If{\TestLeft$(g,(i,j))$}{
		$g(i,j) \gets 1-g(i,j)$\;
		\If{\TestLeft$(g,(i,j))$}{
	 		$g(i,j) \gets g(i-1,j-1)$ \;
       		$g(i-1,j) \gets 1-g(i,j)$ \;
       		$g(i,j-1) \gets 1-g(i,j)$ \;
		}}
    }
 }
 \Return{$g$}
}
\Fn{\TestLeft$(f, (i,j))$}{
\ForAll{$H^p_t \in L$}{
	\If{$f(i,j)=f(i-1,j)=\cdots=f(i-p,j)=t$}{\Return{true}}
}
\ForAll{$V^q_t \in L$}{
	\If{$f(i,j)=f(i,j-1)=\cdots=f(i,j-q)=t$}{\Return{true}}
}
\Return{false}
}
\caption{Algorithm to convert an input tiling to an $L$-dappled one.}
\label{algorithm}
\end{algorithm}
The sub-routine $\mathrm{Violate}$ returns true if $f$ violates any of horizontal or vertical 
conditions at the given cell.
In practice, the check can be efficiently performed by book-keeping the numbers of consecutive tiles 
of smaller weight in the horizontal and the vertical directions.
See the python implementation \cite{code} for details.
\begin{example}
Fig. \ref{fig:alg} shows how Algorithm \ref{algorithm} proceeds for
$T=\{\text{white, orange}\}$ and $L=\{H^{2}_{\text{white}}, V^{2}_{\text{orange}}\}$.
\begin{figure}[htb]
    \centering
    \includegraphics[width=0.3\linewidth]{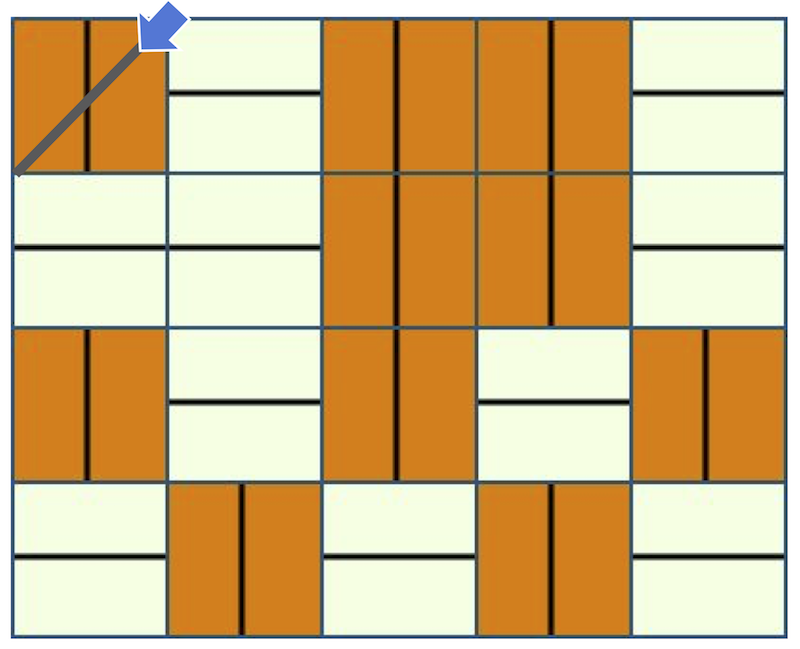}
    \includegraphics[width=0.3\linewidth]{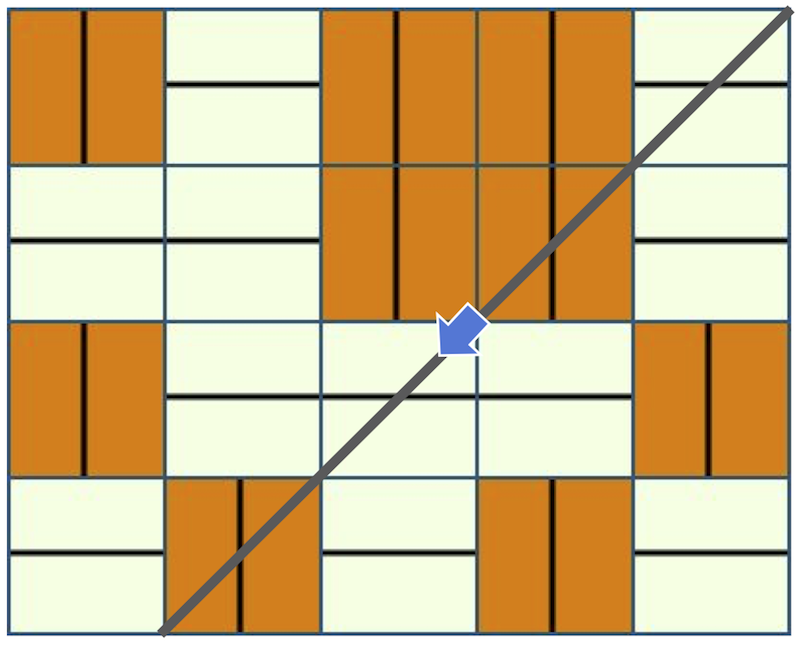}
    \includegraphics[width=0.3\linewidth]{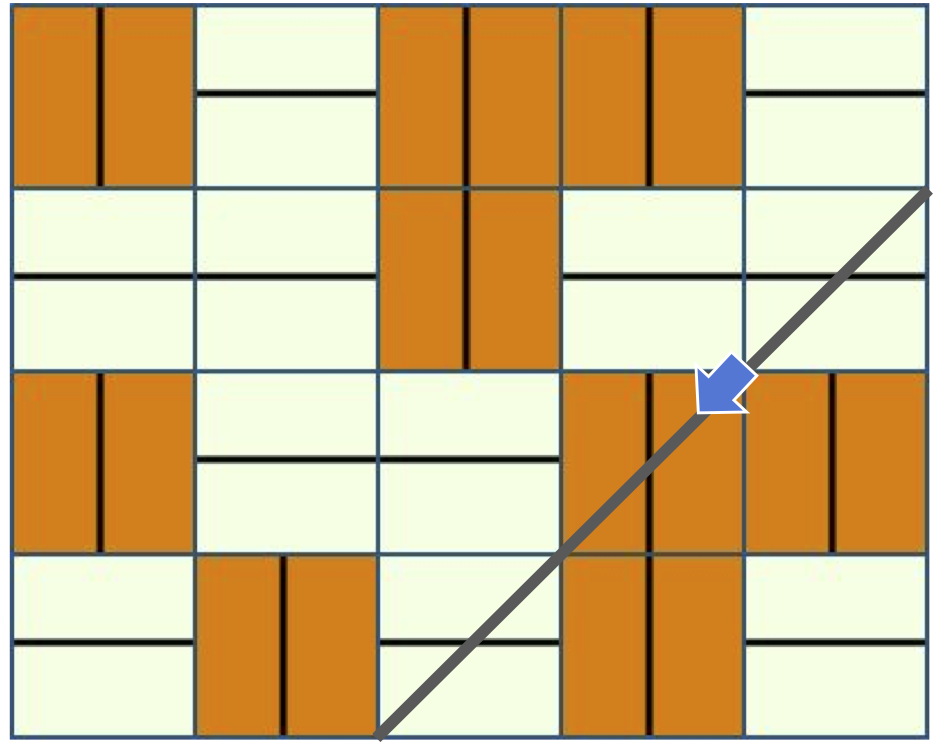}
    \caption{Steps of Algorithm \ref{algorithm}. Left: input tiling, Middle: resolving the violation of $V^2_{\text{orange}}$ at $(2,2)$ by (I),
    Right: resolving the violation of $H^2_{\text{white}}$ at $(3,2)$ by (II).}
    \label{fig:alg}
\end{figure}
\end{example}

\begin{remark}\label{rem:p=1}
Algorithm \ref{algorithm} does not always work when 
$p(t)=1$ or $q(t)=1$ for some $t\in T$.
For example, when $L=\{H^1_0, H^2_1, V^2_1\}$
it cannot rectify the following tiling of $G_{4,3}$:
\[
\begin{matrix}
1 & 0 & 1 & 1 \\
0 & 1 & 0 & 1 \\
1 & 1 & 0 & 0
\end{matrix}
\]
\end{remark}

\section{Extension}
We give two extensions of the main algorithm discussed in the previous section.
\subsection{Non-uniform condition}
It is easy to see that our algorithm works when 
the conditions $H^p_t$ and $V^q_t$
vary over cells.
That is, $p$ and $q$ can be functions of $(i,j)\in G_{m,n}$
as well as $t\in T$ so that
$p,q: T\times G_{m,n} \to \{2,3,4,\ldots,\}\cup \{\infty\}$.
This allows the user more control over the output.
For example, 
the user can put non-uniform constraints,
or even dynamically assign constraints computed from the 
initial tiling.

\begin{example}
Let $T=\{\text{white,orange}\}$ and 
\[
L=\{H^{p(\text{white};i,j)}_{\text{white}}, H^{p(\text{orange};i,j)}_{\text{orange}}, 
V^{q(\text{white};i,j)}_{\text{white}}, V^{q(\text{orange};i,j)}_{\text{orange}} \},
\]
where $p(\text{white};i,j)=q(\text{orange};i,j)=\lceil \frac{i+1}{5} \rceil+1$ and
$p(\text{orange};i,j)=q(\text{white};i,j)=\lceil \frac{m-i}{5} \rceil+1$.
An example of $L$-dappled tiling is given in Fig. \ref{fig:non_uni_cons}.
In the left area, 
long horizontal white strips 
and long vertical orange strips are prohibited,
while in the right area, 
long vertical white strips and
long horizontal orange strips are prohibited.
\begin{figure}[ht]
  \centering
      \includegraphics[width=0.5\linewidth]{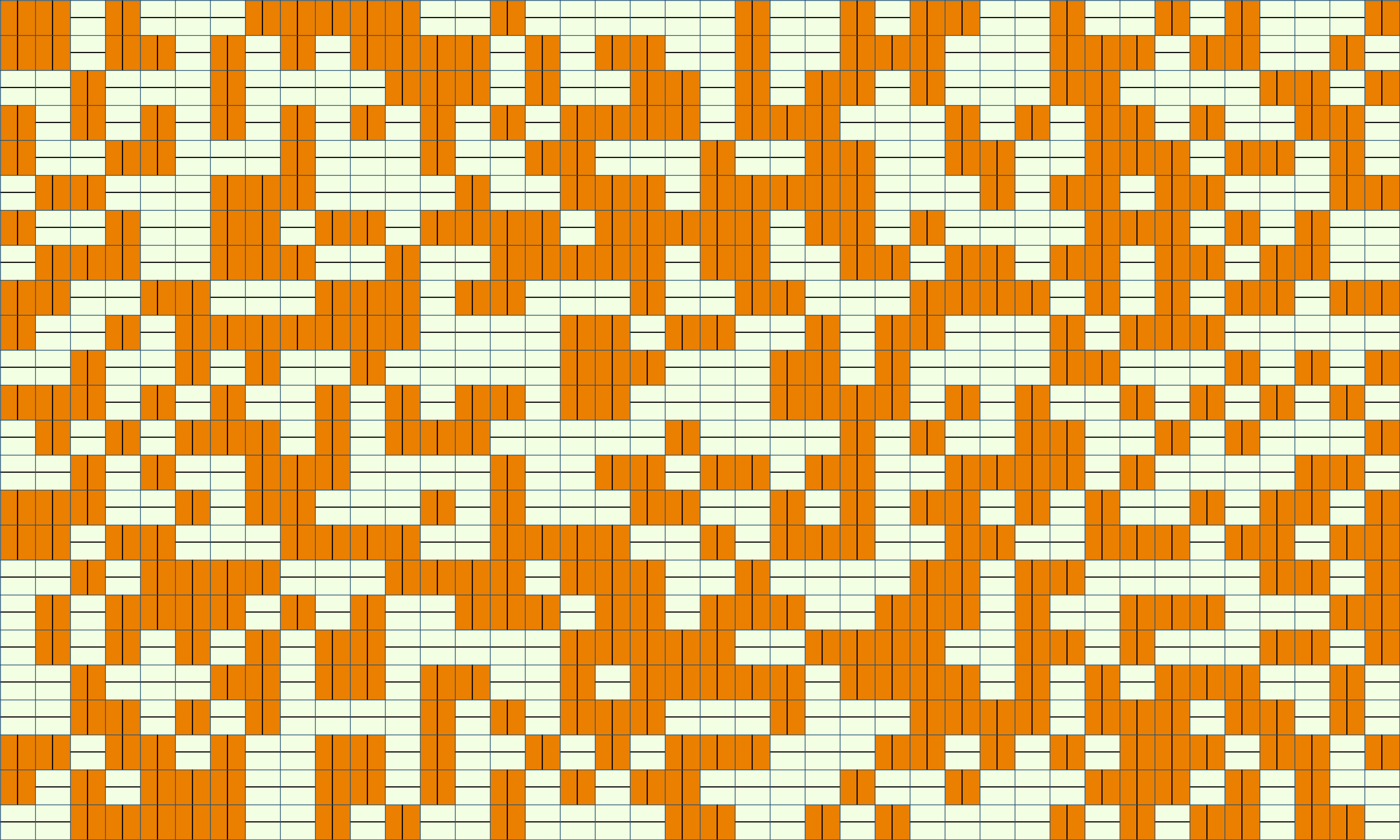}
      \caption{Dappled tiling with non uniform conditions.} \label{fig:non_uni_cons}
\end{figure}
\end{example}

\subsection{Cyclic tiling}\label{sec:cyclic}
Sometimes we would like to have an $L$-dappled tiling of $G_{m,n}$
which can be repeated to fill a larger region, say $G_{km,ln}$ for $k,l\ge 1$.
For this, the conditions have to be understood as being {\em cyclic}; 
for example, $\bar{H}^p_t$ is violated if there is a cell $(i,j)$
with $f(i,j)=f([i-1],j)=\cdots f([i-p],j)=t$,
where $0\le [x] \le m-1$ is the reminder of $x$ divided by $m$.
For a set $\bar{L}$ consisting of conditions of the form
$\bar{H}^{p}_t$ and $\bar{V}^{q}_t$,
we say a tiling $f$ is {\em cyclically $\bar{L}$-dappled}
if it does not violate any of the conditions in $\bar{L}$
in the above cyclic sense.

We discuss a modification of Algorithm \ref{algorithm} to produce a cyclically $\bar{L}$-dappled tiling.
However, there are two limitations: it only works for a limited class
of conditions; when $T=\{0,1\}$, 
we have to assume $\bar{L}$ should satisfy 
$p(t),q(t)>2$ for all $t\in T$
(see Example \ref{cyclic:fail}).
The other drawback is that the algorithm 
changes an input tiling even when it is already cyclically $\bar{L}$-dappled.
This is because it produces a cyclically $\bar{L}$-dappled tiling with 
additional conditions.

Let $f$ be any tiling. 
We introduce Algorithm \ref{cyc-algorithm}, which is a modification of Algorithm \ref{algorithm}.
We visit cells in increasing order of the weight as in Algorithm \ref{algorithm}.
When the cell $(i,j)$ is visited,
we define a set of non-cyclic conditions $L(i,j)$ which is more stringent than $\bar{L}$.
For each $\bar{H}^p_t\in \bar{L}$,
\begin{enumerate}
\renewcommand{\labelenumi}{(\roman{enumi}).}
\item skip if $i<p-2$
\item add $H^{p-2}_t$ to $L(i,j)$ if $i=p-2$
\item add $H^{p-k}_t$ to $L(i,j)$ if $i=m-1$,
where $k$ is the smallest non-negative integer such that $f(k, j) \neq t$.
\item add $H^p_t$ to $L(i,j)$ otherwise.
\end{enumerate}
And do similarly for $\bar{V}^q_t$.
Then, resolve (if any) violation of $L(i,j)$ at $(i,j)$
 in the non-cyclic sense using Algorithm \ref{algorithm}.
By (ii) it is ensured that there exists $k\le p-2$ such that $f(k,j)\neq t$ if $\bar{H}^p_t\in \bar{L}$.
Note that although we have to impose $H^1_t$ at $(1,j)$ when $p=3$,
Algorithm \ref{algorithm} works with no problem in this case.
For (iii) we always have $p-k\ge 2$ since $(m-1,j)$ is visited later than $(p-2,j)$,
and $k$ must be less than or equal to $p-2$ by (ii).

Due to the extra condition imposed by (ii),
the output is in a restricted class of cyclically $\bar{L}$-dappled tilings.

\begin{proposition}
Fix any $m,n>0$, $T=\{0,1\}$, and
$\bar{L}=\{\bar{H}^{p(t)}_t, \bar{V}^{q(t)}_t\mid t\in T\}$
with $p(t),q(t)>2$ for all $t\in T$.
Algorithm \ref{cyc-algorithm} takes a tiling
$f: G_{m,n}\to T$ and outputs a cyclically $\bar{L}$-dappled tiling.
\end{proposition}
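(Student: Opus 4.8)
The plan is to recognize Algorithm~\ref{cyc-algorithm} as an instance of the non-uniform version of Algorithm~\ref{algorithm} and then to show that its (stronger, non-cyclic) per-cell guarantees force the global cyclic conditions. First I would note that the cyclic algorithm visits cells in increasing weight order and, at each $(i,j)$, resolves any violation of the finite set $L(i,j)$ by precisely steps (I)--(II). Hence, by the non-uniform extension of the non-cyclic Proposition established above, the output $g$ satisfies every condition in $L(i,j)$ at $(i,j)$, for all cells. It then remains to prove that this family of local, non-cyclic guarantees implies that $g$ violates neither $\bar{H}^p_t$ nor $\bar{V}^q_t$ in the cyclic sense.

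Fix a condition $\bar{H}^p_t\in\bar{L}$ and a row $j$; the vertical conditions are handled identically after interchanging the roles of $m,i,p$ with $n,j,q$. I would first use (ii): since $g$ satisfies $H^{p-2}_t$ at $(p-2,j)$, not all of $g(0,j),\dots,g(p-2,j)$ equal $t$, so there is a first index $k\le p-2$ with $g(k,j)\neq t$. In particular row $j$ is not constantly $t$, and its maximal cyclic runs of $t$'s are honest finite arcs. I would then bound the length of each such arc by $p$, distinguishing whether the arc crosses the cyclic edge from column $m-1$ to column $0$.

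For an arc that does not cross the edge, let $c$ be its right endpoint; the condition active at $(c,j)$ caps the length of the run ending there. If $p-1\le c\le m-2$ this is $H^p_t$ by (iv); if $c=m-1$ then, since the arc does not wrap, $g(0,j)\neq t$, forcing $k=0$ and hence the condition $H^{p}_t$ of (iii); if $c=p-2$ it is $H^{p-2}_t$ by (ii); and if $c\le p-3$ the length is at most $c+1\le p-2$ with no condition needed. In every case the arc has length at most $p$. The one remaining, and central, case is the arc that does cross the edge. It decomposes into a right block on columns $m-r,\dots,m-1$ and a left block on columns $0,\dots,\ell-1$, all equal to $t$, of total cyclic length $r+\ell$. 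By the very definition of $k$ the left block has length exactly $\ell=k$, while (iii) imposes $H^{p-k}_t$ at $(m-1,j)$ and so caps the right block at $r\le p-k$. Therefore $r+\ell\le(p-k)+k=p$, and the wrapping arc is short enough as well. Collecting the two cases over all rows, and by symmetry over all columns and all conditions of $\bar{L}$, shows that $g$ is cyclically $\bar{L}$-dappled.

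I expect the main obstacle to be making the wrap-around step fully rigorous, that is, confirming that the index $k$ read off when $(m-1,j)$ is processed coincides with the length of the left block in the final tiling. The weight ordering is what makes this work: a cell $(c,j)$ can be altered only while $(c,j)$, $(c+1,j)$, or $(c,j+1)$ is being processed, all of weight at most $c+j+1$, so once the computation reaches weight $m-1+j$ the entries $g(0,j),\dots,g(k,j)$ are frozen (the limiting index $c=m-2$ needs the separate, easy remark that the right block then has length at most one). Here the standing hypotheses do genuine work: $p>2$ keeps every imposed bound at least $1$, with the boundary case $H^1_t$ at $(1,j)$ for $p=3$ handled as in the text, and the relation that $(p-2,j)$ lies in the grid and is visited strictly before $(m-1,j)$ (amounting to $p\le m$, and $q\le n$ for the vertical conditions) is exactly what guarantees $k\le p-2<m-1$ and $p-k\ge 2$, so that each invocation of Algorithm~\ref{algorithm} stays within the regime covered by the non-cyclic Proposition; I would flag this size restriction explicitly. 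Finally I would check that strengthening the conditions as in (ii) does not interfere with the resolution in steps (I)--(II) --- this is the only place the non-uniform retraction property is invoked, and it should follow verbatim from the analysis preceding the non-cyclic Proposition.
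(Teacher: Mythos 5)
Your proof is correct and follows essentially the same route as the paper: the paper justifies this proposition only through the informal discussion preceding its statement --- the per-cell conditions (i)--(iv), the observation that (ii) forces some $k\le p-2$ with $g(k,j)\neq t$, and that (iii) then caps the wrap-around run by $(p-k)+k=p$ --- which is exactly your decomposition into non-wrapping and wrapping arcs, combined with the non-uniform retraction property of the basic algorithm. Your additions, namely the freezing argument showing that the index $k$ read at processing time agrees with the final tiling, and the explicit flag that $p\le m$ and $q\le n$ must hold (the paper silently assumes $(p-2,j)$ exists and is visited strictly before $(m-1,j)$), are refinements of, not departures from, the paper's argument.
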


\begin{algorithm}[htb]
\SetKw{KwCont}{continue}
\SetKw{KwBreak}{break}
\SetKw{KwAnd}{and}
\SetKw{KwNot}{not}
\SetKwFunction{TestLeft}{ViolateCyc}
\SetKwProg{Fn}{Function}{}{}
\SetAlgoLined
 \KwIn{A tiling $f:G_{m,n}\to T$}
 \KwOut{A cyclically $\bar{L}$-dappled tiling $g: G_{m,n}\to T$}
 
\Begin{
 $g \gets f$ \;
 \For{$weight=0$ \KwTo $m+n-2$}{
	 \ForAll{$(i,j)$ such that $i+j=weight$}{
	 	\If{\TestLeft$(g,(i,j))$}{
		$g(i,j) \gets 1-g(i,j)$\;
		\If{\TestLeft$(g,(i,j))$}{
	 		$g(i,j) \gets g(i-1,j-1)$ \;
       		$g(i-1,j) \gets 1-g(i,j)$ \;
       		$g(i,j-1) \gets 1-g(i,j)$ \;
		}}
    }
 }
 \Return{$g$}
}
\Fn{\TestLeft$(f, (i,j))$}{
\ForAll{$\bar{H}^p_t \in \bar{L}$}{
\Switch{i}{
    \Case{$i=p-2$}{
	        \If{$f(0,j)=f(1,j)=\cdots=f(p-2,j)=t$}{\Return{true}}
    }
    \Case{$i=m-1$}{
        \If{there exists $k<p-2$ such that $f(k,j)=f(k-1,j)=\cdots=f(0,j)=
        f(m-1,j)=f(m-2,j)=\cdots=f(m-p+k,j)=t$}{\Return{true}}
    }
    \Case{$i>p-2$}{
    	\If{$f(i,j)=f(i-1,j)=\cdots=f(i-p,j)=t$}{\Return{true}}
    }
}
}
\ForAll{$\bar{V}^q_t \in \bar{L}$}{
\Switch{j}{
    Similar to the above.
}
}
\Return{false}
}
\caption{Algorithm to convert an input tiling to a cyclically $\bar{L}$-dappled one.}
\label{cyc-algorithm}
\end{algorithm}

\begin{example}\label{cyclic:fail}
One might wonder why we cannot just
replace (ii) above with 
\[
\text{(ii)' add $H^{p-1}_t$ to $L(i,j)$ if $i=p-1$}
\]
 to make it work when $p=2$.
In this case, we may have to add
$H^1_t$ to $L(m-1,j)$ in (iii),
which is problematic as we see 
in the following example with 
$\bar{L}=\{\bar{H}^3_0,\bar{V}^3_1\}$:
\[
\begin{array}{cccccc}
0 &0 &1 &0 &1 &0\\
1 &0 &1 &0 &1 &1\\
1 &1 &0 &1 &0 &1\\
0 &1 &1 &0 &0 &1\\
0 &0 &1 &1 &0 &\underline{1}\\
1& 0 &0 &0 &1 &1
\end{array}
\Leftrightarrow
\begin{array}{cccccc}
0 &0 &1 &0 &1 &0\\
1 &0 &1 &0 &1 &1\\
1 &1 &0 &1 &0 &1\\
0 &1 &1 &0 &0 &1\\
0 &0 &1 &1 &1 &0\\
1& 0 &0 &0 &\underline{1} &1
\end{array}
\]
Look at the tiling on the left.
Algorithm \ref{cyc-algorithm} with (ii) replaced by (ii)' does nothing up to the cell $(5,4)$ marked with $\underline{1}$.
Here we have $L(5,4)=\{ H^1_0, V^3_1\}$.
Rectifying the cell $(5,4)$ by Algorithm \ref{algorithm} will introduce
a new violation at $(4,5)$ as we see on the right,
and vice versa.
\end{example}

\begin{remark}\label{rem:cyclic_p=2}
If $\bar{L}$ consists of just two conditions 
$\{\bar{H}^p_0, \bar{V}^q_1\}$,
we can modify Algorithm \ref{cyc-algorithm}
further 
to make it work
even when $p=q=2$.
The idea is to make the first two rows and columns 
draughtboard.
Modify the input tiling to satisfy the following two conditions:
\begin{enumerate}[a)]
    \item $f(i,0)\neq f(i,1)$,
    $f(0,j)\neq f(1,j)$, $f(2k,0)\neq f(2k+1,0)$, and $f(0,2l)\neq f(0,2l+1)$
    \item $f(m-2,0)=f(0,n-2)$
\end{enumerate}
Then, the rest is rectified with Algorithm \ref{cyc-algorithm},
with (ii) replaced by
\[
\text{(ii)' add $H^{p-1}_t$ to $L(i,j)$ if $i=p-1$}.
\]
For the technical details, refer to
the implementation \cite{code}.
\end{remark}



\begin{example}
Fig. \ref{fig:cyclic} shows cyclically dappled tilings of $G_{10,6}$ obtained by 
Algorithm \ref{cyc-algorithm} for $\bar{L}=\{ \bar{H}^3_{\text{white}}, \bar{V}^3_{\text{orange}}\}$
and
by Remark \ref{rem:cyclic_p=2} for $\bar{L}=\{ \bar{H}^2_{\text{white}}, \bar{V}^2_{\text{orange}}\}$.
We repeated them twice both horizontally and vertically to obtain dappled tilings of $G_{20,12}$.
\begin{figure}[ht]
  \centering
  \begin{tabular}{ccc} 
      & Pattern & Repetition\\
      $p=q=3$ & \parbox[c]{0.3\linewidth}{
      \includegraphics[width=1\linewidth]{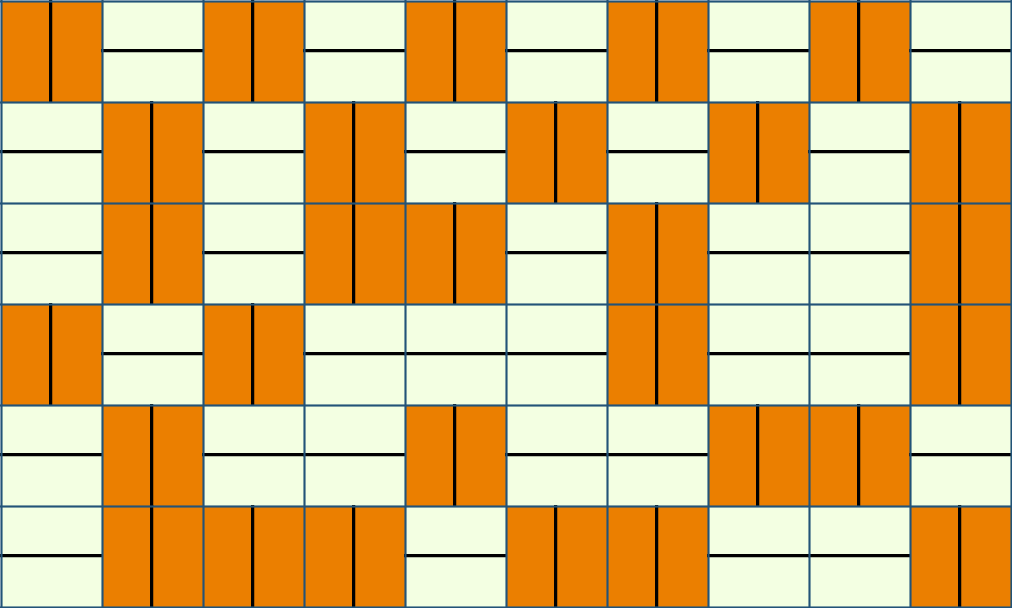}} & \parbox[c]{0.3\linewidth}{
      \includegraphics[width=1\linewidth]{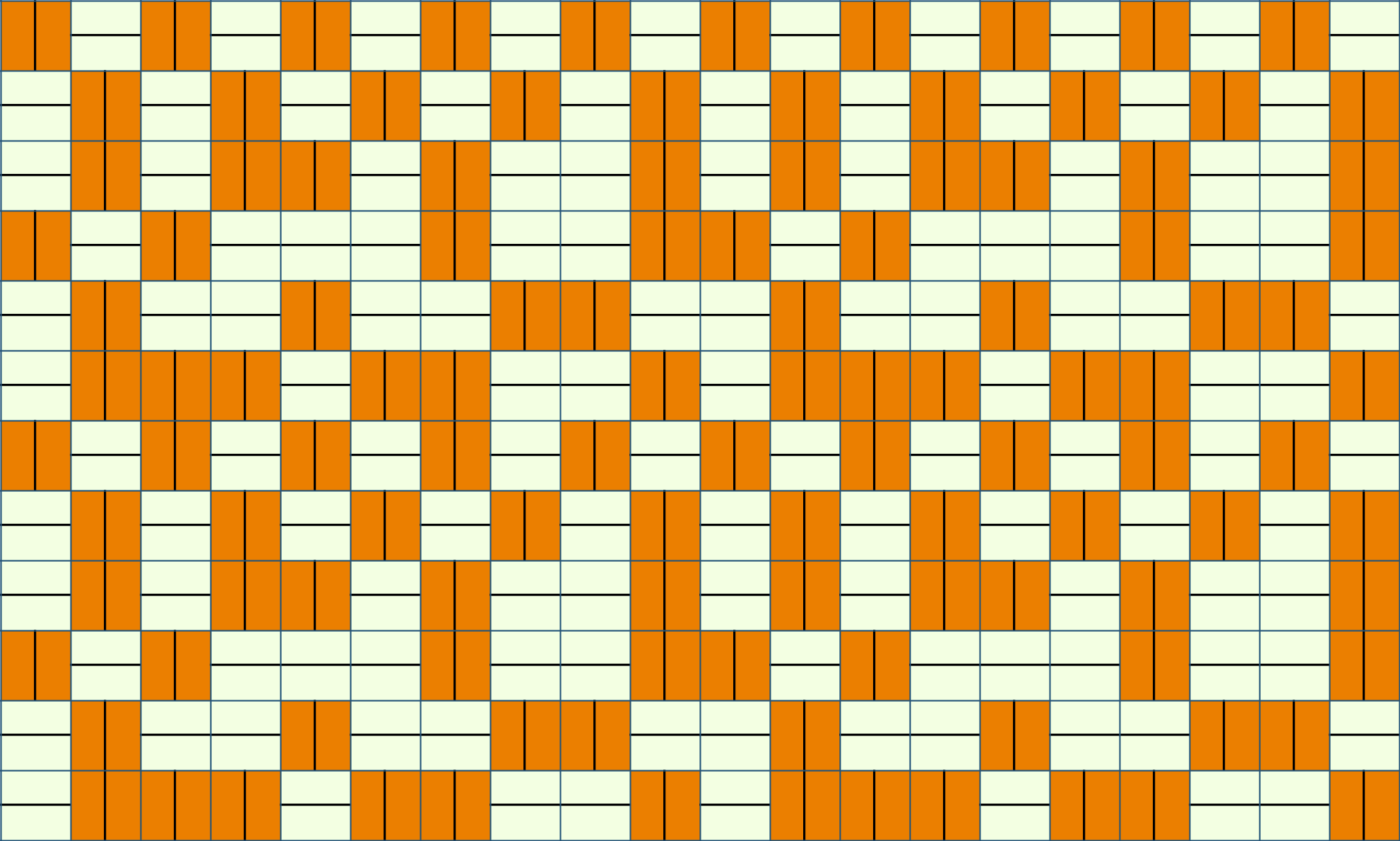}} \\
      & Pattern & Repetition\\
      $p=q=2$ & \parbox[c]{0.3\linewidth}{
      \includegraphics[width=1\linewidth]{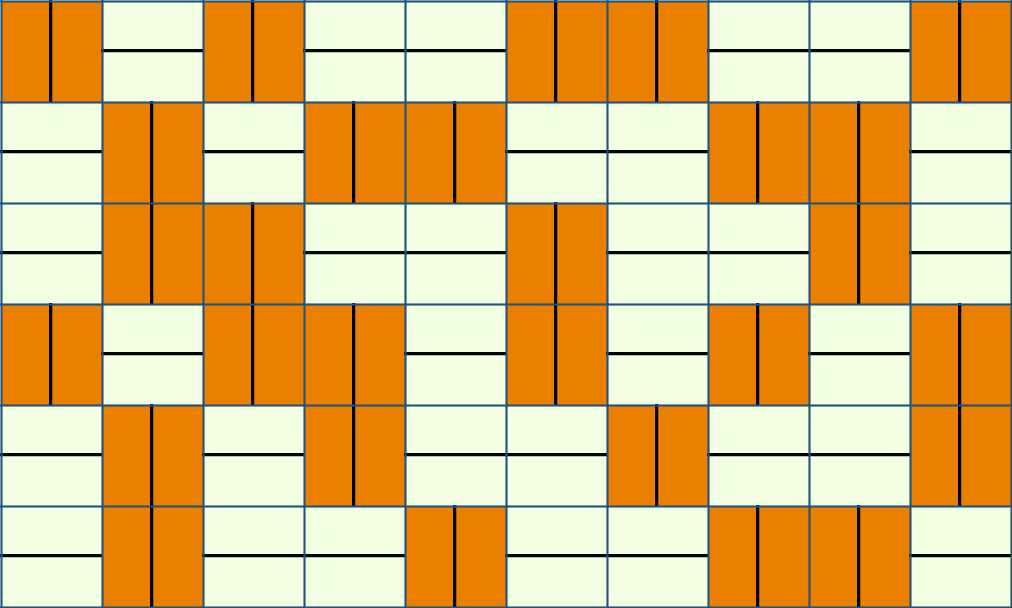}} & \parbox[c]{0.3\linewidth}{
      \includegraphics[width=1\linewidth]{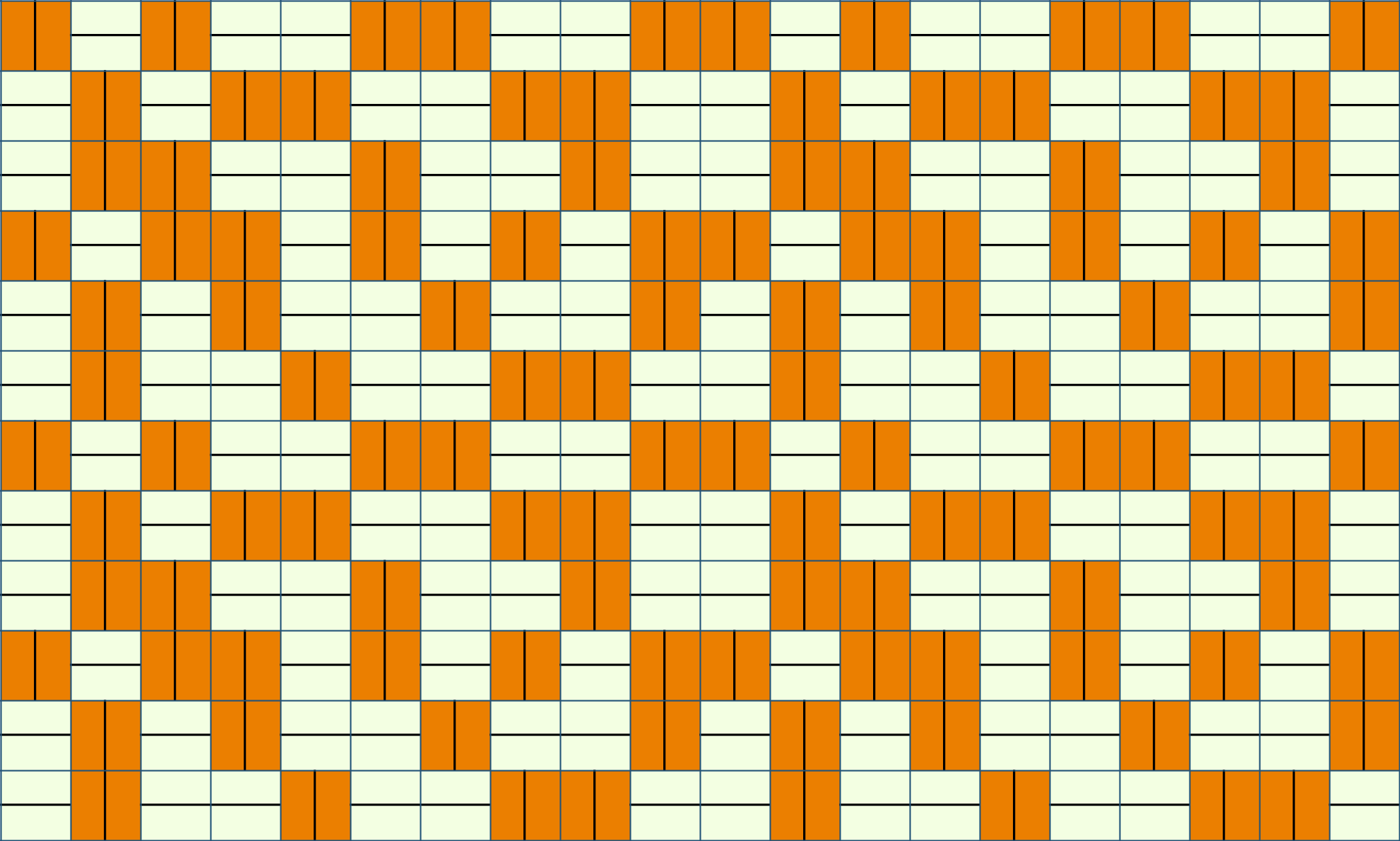}} \\
  \end{tabular}
  \caption{Cyclically dappled tilings obtained with our algorithm.} \label{fig:cyclic}
\end{figure}
\end{example}

\section{Example: Brick Wang Tiles}\label{sec:brick}
A method to create brick wall textures using the brick Wang tiles is introduced in
A.~Derouet-Jourdan et al.~\cite{meis2015} and studied further in \cite{AKM,scss2016}.
In this method, 
each tile represents how the corners of four bricks meet. 
It is assumed that the edges of the bricks are axis aligned and that each tile is traversed with a straight line, either vertically or horizontally. 
For aesthetic reasons, crosses are forbidden, where all four bricks are aligned and the corresponding tile is traversed by two straight lines. 
Formally, the set of {\em brick Wang tiles} $W$ is defined by
\[
W=\{ (c_1,c_2,c_3,c_4)\in C^4 \mid (c_1=c_3 \text{ and } c_2\neq c_4) \text{ or } (c_1 \neq c_3 \text{ and } c_2=c_4)\},
\]
where $C$ is a finite set, which we can think of as the set of
``positions'' of the brick edges (see Fig.~\ref{fig:brickwangtile}).
A tiling $\tau: G_{m,n}\to W$ is said to be a {\em valid Wang tiling} with $W$
if at all cells the positions of edges are consistent with those of the adjacent cells:
\begin{equation}\label{brick-wang}
 \tau(i,j)_1=\tau(i-1,j)_3, \tau(i,j)_2=\tau(i,j-1)_4 \quad
 (1\le i < m-1, 1\le j < n-1).
\end{equation}
Here, we do not pose any condition on the positions on the boundary;
we are concerned with the {\em free boundary} problem.
\begin{figure}[htb]
\centering
\resizebox{0.7\width}{0.7\height}{
\begin{tikzpicture}
[
    point/.style = {draw, circle,  fill = black, inner sep = 1pt},
]

\newcommand{\pythagwidth}{3cm}
\newcommand{\pythagheight}{2cm}
\newcommand{\tilesize}{3cm}

\coordinate [] (A) at (0, 0);
\coordinate [] (B) at (0, \tilesize);
\coordinate [] (C) at (\tilesize, \tilesize);
\coordinate [] (D) at (\tilesize, 0);

\draw [dashed] (A) -- (B) -- (C) -- (D) -- (A);

\coordinate [] (E) at (0.75 * \tilesize, \tilesize);
\coordinate [] (F) at (\tilesize, 0.5 * \tilesize);
\coordinate [] (G) at (0.25 * \tilesize, 0);
\coordinate [] (H) at (0, 0.5 * \tilesize);

\coordinate [] (E') at (0.75 * \tilesize, 0.5 * \tilesize);
\coordinate [] (G') at (0.25 * \tilesize, 0.5 * \tilesize);

\draw [very thick] (H) -- (F);
\draw [very thick] (G) -- (G');
\draw [very thick] (E) -- (E');

\node at (H) [point] {};
\node at (F) [point] {};
\node at (E) [point] {};
\node at (G) [point] {};


\draw[-latex,thick](2.5,-0.7 )node[right]
        {Brick edge} to[out=180,in=-90] (0.5 * \tilesize, 0.45 * \tilesize);
		
\draw[-latex,thick](-0.5*\tilesize, 0.6* \tilesize)node[left]
        {edge position $c_1$} to[out=-90,in=180] (0, 0.5 * \tilesize);

\draw[-latex,thick](1.1*\tilesize,1.2* \tilesize)node[right]
        {edge position $c_2$} to[out=-180,in=90] (0.75 * \tilesize, 1.05 * \tilesize);

\draw[-latex,thick](1.2*\tilesize, 0.6* \tilesize)node[right]
        {edge position $c_3$} to[out=-90,in=0] (\tilesize, 0.5 * \tilesize);

\draw[-latex,thick](-0.5*\tilesize,-0.7)node[left]
        {edge position $c_4$} to[out=0,in=-90] (0.25 * \tilesize, 0);

\end{tikzpicture}
}
\includegraphics[width=0.3\linewidth]{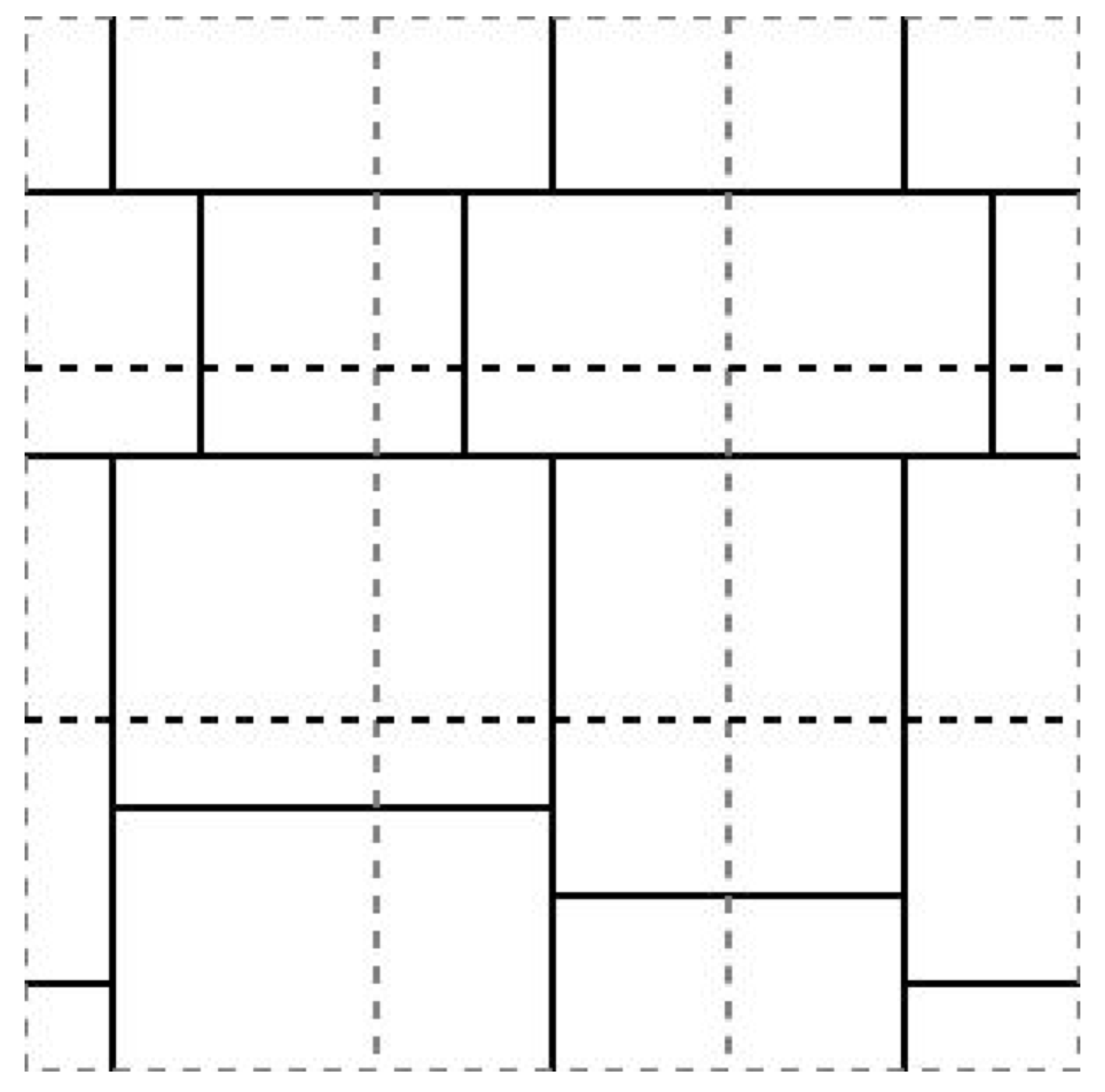}
\caption{A brick Wang tile and a $3\times 3$-tiling}\label{fig:brickwangtile}
\end{figure}

In \cite{AKM}, an algorithm to give a valid Wang tiling with $W$
for any planar region which contains a cycle
 and with {\em any} boundary condition.
 In this paper, we give a different approach to give
a brick pattern for a rectangular region in the plane
using our dappled tiling algorithm.
We restrict ourselves to the case of 
the free boundary condition and rectangular region,
but with the current approach we have a better control over the output.

A problem of the previous algorithms in \cite{meis2015,AKM} is
that it sometimes produces long traversal edges; 
horizontally consecutive occurrence of tiles with $c_1=c_3$
or vertically consecutive occurrence of tiles with $c_2=c_4$.
These are visually undesirable (see Fig. \ref{fig:comp-wall}).
\begin{figure}[ht]
  \centering
      \includegraphics[width=0.3\linewidth]{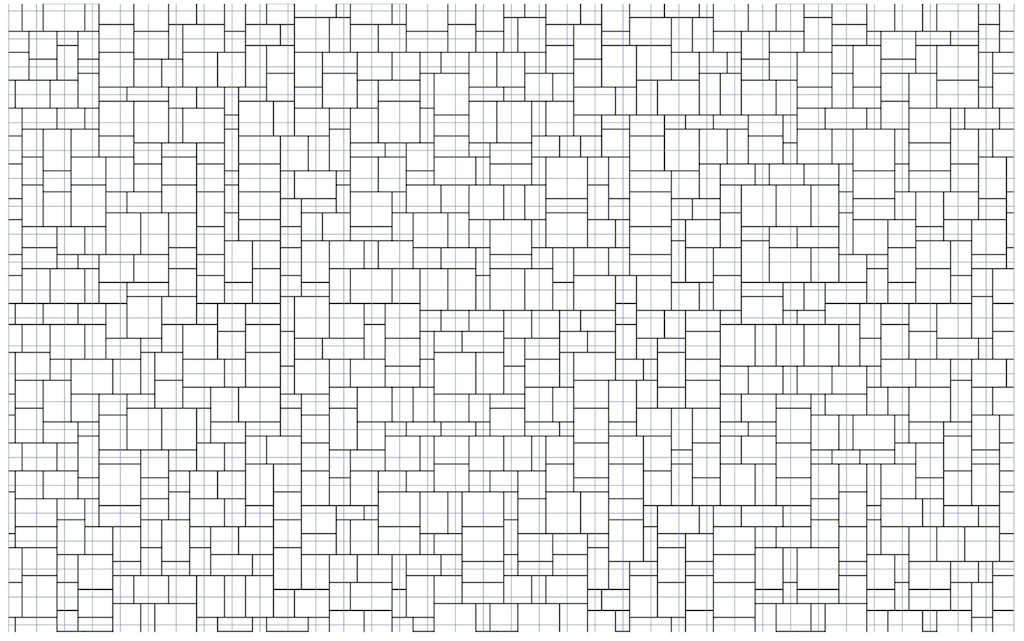} 
      \includegraphics[width=0.3\linewidth]{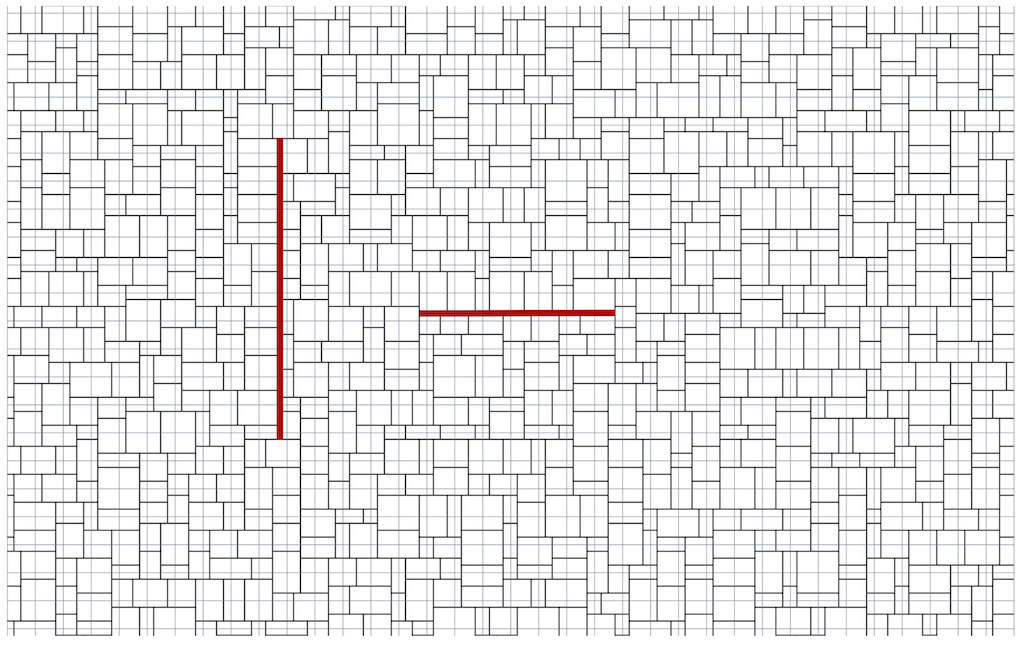}
      \includegraphics[width=0.3\linewidth]{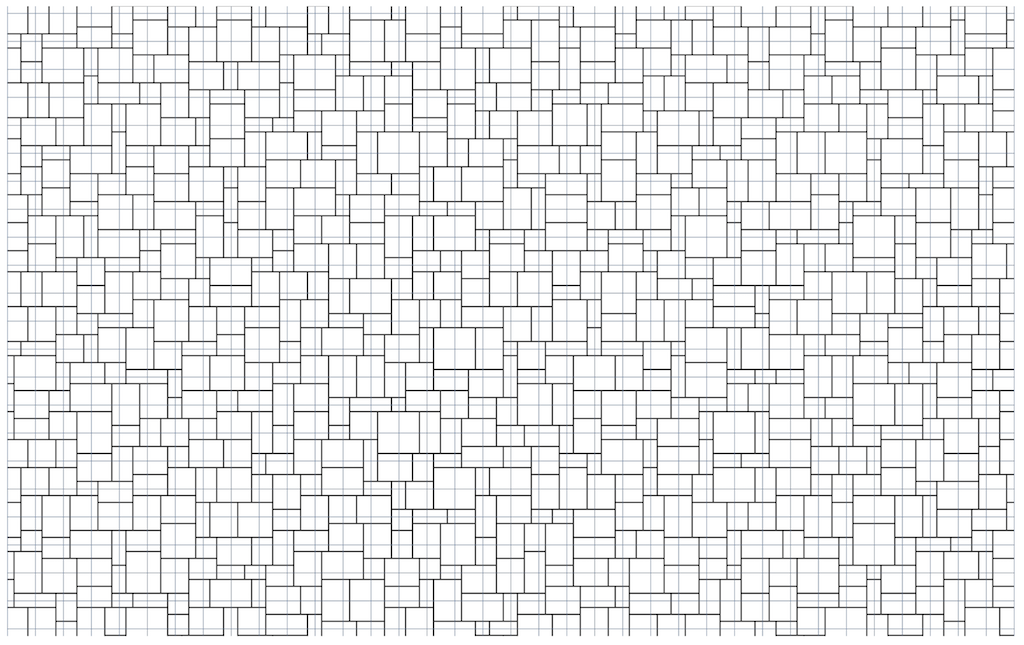}
  \caption{Brick Wall Patterns.Left: produced by the algorithms in \cite{meis2015,AKM},
  Middle: same as Left with emphasis on long traversal lines, Right: produced by our new algorithm  } \label{fig:comp-wall}
\end{figure}
We overcome this problem by our $L$-dappled tiling algorithm.
First, we divide $W$ into two classes
$W=W_0 \sqcup W_1$, where $W_0$ consists of those with $c_1=c_3$ and $W_1$ consists of those with $c_2=c_4$.
We label tiles in $W_0$ with $0$ and those in $W_1$ with $1$.
We now consider $L$-dappled tilings with $T=\{0,1\}$ and $L=\{H^p_0, V^q_1\}$,
which avoid horizontal strips bigger than $p$ consisting of tiles from $W_H$
and vertical strips bigger than $q$ consisting of tiles from $W_V$.

From an $L$-dappled tiling $f$, we can construct a valid Wang tiling with $W$:
Visit cells from left to right, and top to bottom.
At $(i,j)$, use \eqref{brick-wang} to determine edge positions for $c_1$ and $c_2$ (when $i,j>0$).
If $f(i,j)=0$, set $c_3=c_1$. Otherwise, set $c_4=c_2$.
Pick any positions randomly for the rest of the edges.
Obviously, this gives a valid Wang tiling with the desired property.

\begin{example}
Fig.~\ref{fig:wang} shows brick patterns constructed from tilings of $G_{10,6}$ with $T=\{0,1\}$.
The upper pattern, which is constructed from a user specified tiling,
 shows a clear diagonal pattern.
The lower pattern, which is constructed from the $L$-dappled tiling with $L=\{ H^2_{\text{white}}, V^2_{\text{orange}}\}$
produced by Algorithm \ref{algorithm} applied to the use specified tiling,
 looks more random while maintaining a subtle feel of the diagonal pattern.
\begin{figure}[ht]
  \centering
  \begin{tabular}{ccc} 
      & initial tiling & corresponding Brick Wang tiles\\
       & \parbox[c]{0.3\linewidth}{
      \includegraphics[width=1\linewidth]{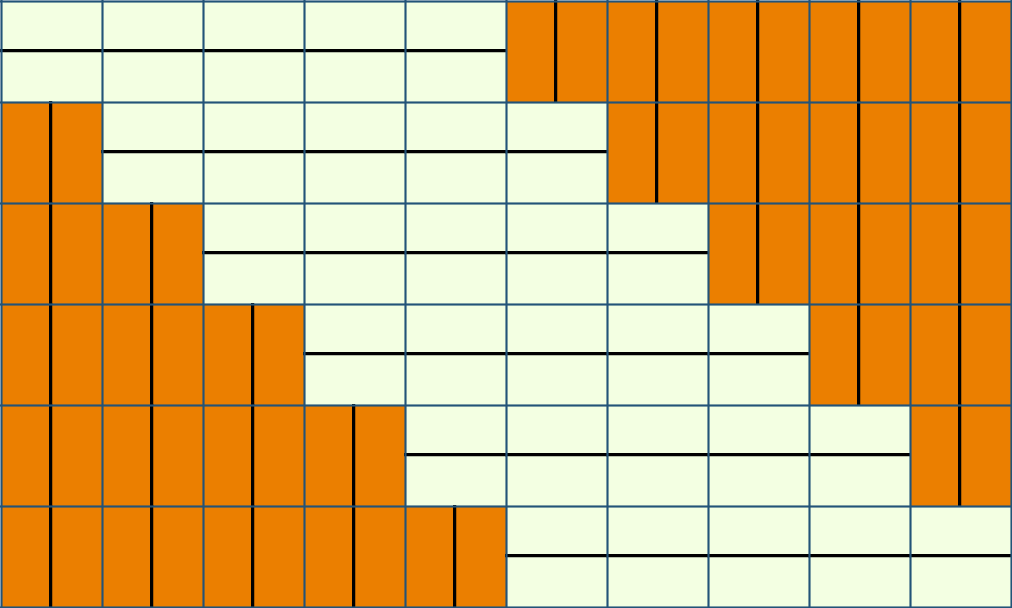}} & \parbox[c]{0.3\linewidth}{
      \includegraphics[width=1\linewidth]{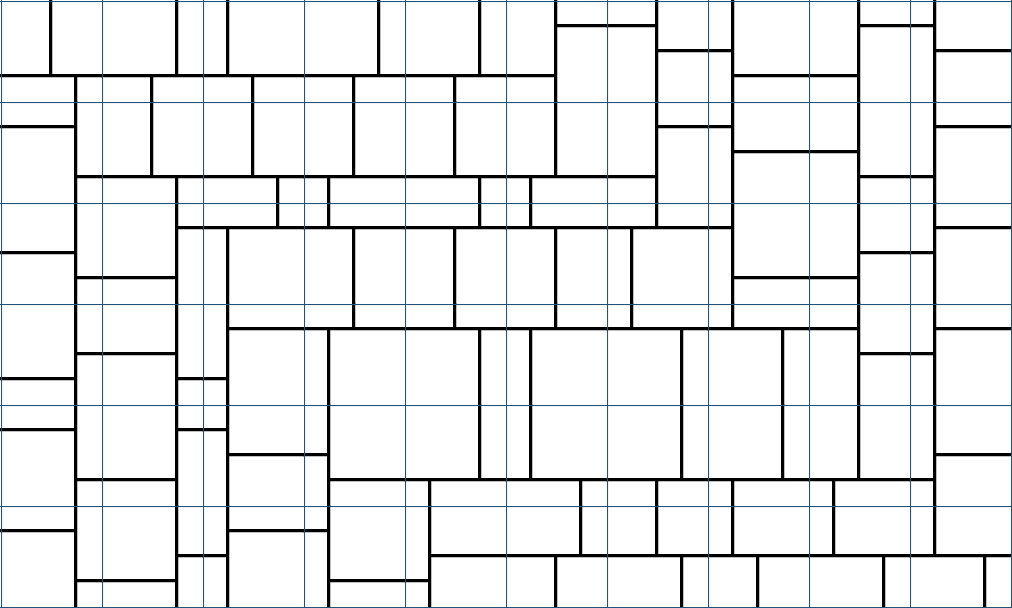}} \\
      & dappled tiling & corresponding Brick Wang tiles\\
       & \parbox[c]{0.3\linewidth}{
      \includegraphics[width=1\linewidth]{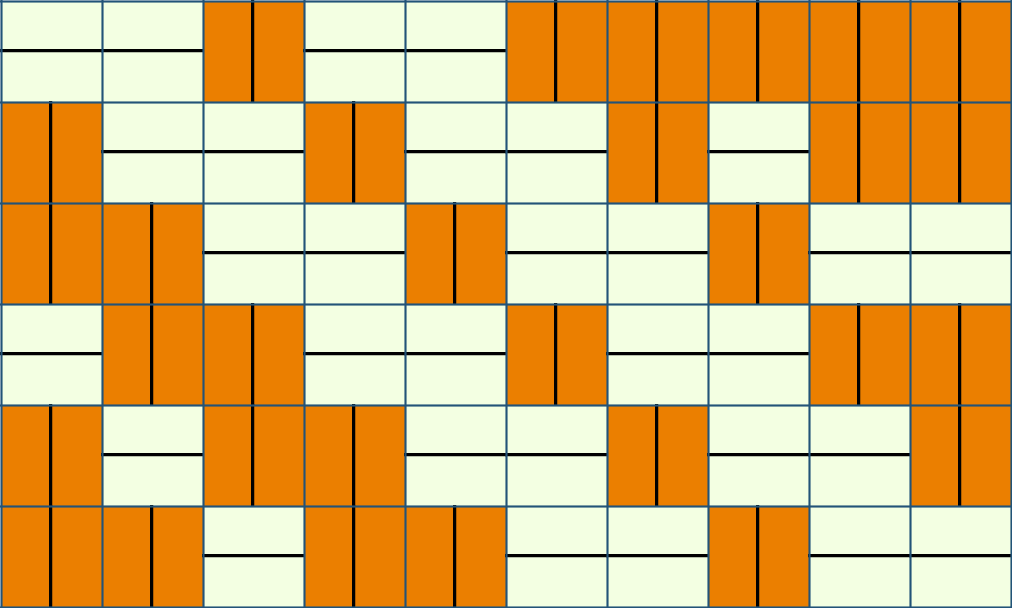}} & \parbox[c]{0.3\linewidth}{
      \includegraphics[width=1\linewidth]{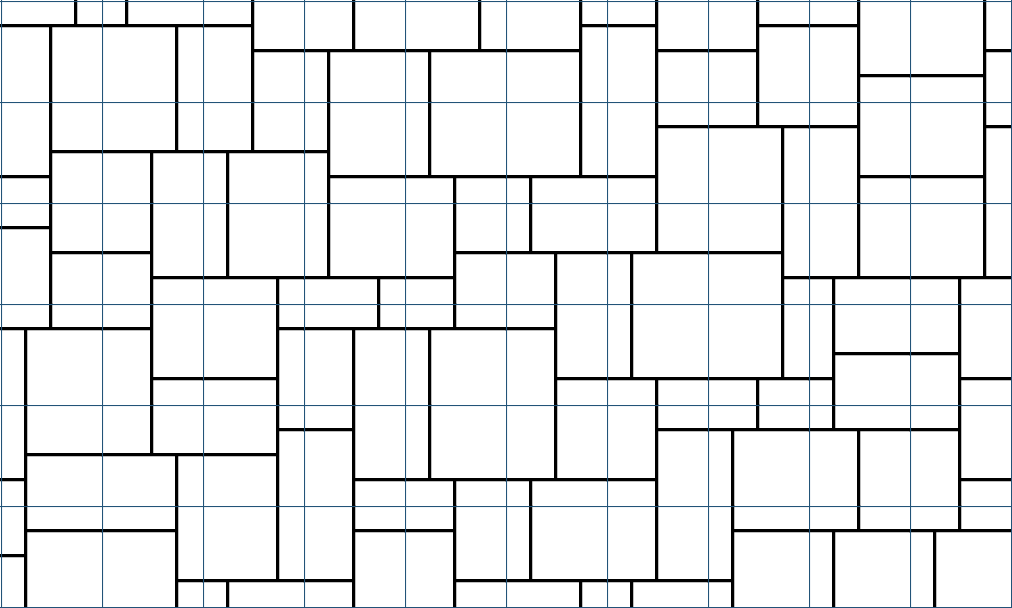}} \\
  \end{tabular}
  \caption{Dappled tiling and corresponding Brick Wang tiling} \label{fig:wang}
\end{figure}

\end{example}

\section{Example: Flow Tiles}\label{sec:flow}
Consider an $L$-dappled tiling with
$T=\{-, |\}$ and $L=\{H^{p_-}_{-}, H^{p_|}_{|}, V^{q_{-}}_{-}, V^{q_|}_{|}\}$.
We interpret it as a flow field to create a crowd simulation.
We start with particles spread over the tiling. They move around following
the ``guidance'' specified by the tile.
More precisely, each particle picks a direction according to the tile on which it locates.
For example, assume a particle is at a cell with $-$.
Then, choose either left or right and move in the direction.
When it reaches the centre of an adjacent tile, say with $|$, 
chooses either up or down and continues in the direction.
See the supplementary video \cite{video}.

\section{Conclusion and Future work}
We defined the notion of dappled tilings, 
which is useful to produce texture 
patterns free of a certain kind of repetition.
We gave an efficient algorithm (Algorithm \ref{algorithm}) to convert any tilings to a dappled one.
Our method has the following advantages.
\begin{itemize}
\item It produces all the dappled tilings if we start with a random tiling.
This is because the algorithm does not modify the input tiling if it is already $L$-dappled.
\item It has some control over the distribution of tiles since we can specify the initial tiling.
\end{itemize}
We also discussed an algorithm (Algorithm \ref{cyc-algorithm}) 
to convert any tilings to a cyclically dappled one. 
Cyclically dappled tilings can be used repeatedly to fill a larger region.
However, Algorithm \ref{cyc-algorithm} is limited in the sense that it does not produce
all the possible cyclically dappled tilings.

We finish our discussion with a list of future work
which encompasses both the theoretical and the practical problems.
\begin{enumerate}[{[}1{]}]
\item The number of $L$-dappled tilings of $G_{m,n}$ with a given set $L$ of conditions: 
to determine an explicit or recursive formula is a mathematically interesting problem.
\item A better cyclic algorithm: in \S \ref{sec:cyclic} we gave an algorithm
to produce cyclically dappled tilings with some limitations.
It would be good to get rid of these limitations.
\item Conditions specified by subsets:
For $\tau \subset T$, we define the condition $H^p_\tau$ which prohibits
horizontal strips consisting of $p+1$ tiles in $\tau$.
We would like to give an algorithm to produce $L$-dappled tilings, 
where $L$ consists of this kind of generalised conditions.
For example, by setting $L=\{H^2_{\{\text{white, grey}\}}, V^2_{\{\text{grey, black}\}} \}$ we can produce tilings without long strips of similar colour.
\item Closest dappled tiling:
Our algorithm takes a tiling as input and produces an $L$-dappled tiling, 
which is usually not very different from the input. 
However, the output is not the closest solution in terms of the Hamming distance
$d(f_1,f_2)=|\{(i,j)\in G_{m,n}\mid f_1(i,j)\neq f_2(i,j)\}|$.
\begin{example}
For $L=\{H^2_0, V^2_1\}$, Algorithm \ref{algorithm} converts
\[
\begin{matrix}
0 & 0 & 0  \\
1 & 0 & 1  \\
0 & 0 & 1 
\end{matrix}
\Rightarrow
\begin{matrix}
0 & 0 & 1  \\
1 & 0 & 1  \\
0 & 1 & 0 
\end{matrix}
\]
but one of the closest dappled tilings to the input is
\[
\begin{matrix}
0 & 1 & 0  \\
1 & 0 & 1  \\
0 & 0 & 1 
\end{matrix}
\]
\end{example}
It is interesting to find an algorithm to produce an $L$-dappled tiling
closest to the given tiling.
\item Extension of the flow tiling in \S \ref{sec:flow}:
we can consider different kinds of tiles such as
emitting/killing tiles, where new particles are born/killed,
and speed control tiles, where the speed of a particle is changed.
\item A parallel algorithm: our algorithm is sequential but it is desirable to have a parallelised algorithm.
We may use a cellular automaton approach.
\item Global constraints: the conditions we consider in the $L$-dappled tiling is {\em local} in the sense that
they can be checked by looking at a neighbourhood of each cell.
Global constraints such as specifying the total number of a particular tile
can be useful in some applications. 
We would like to generalise our framework so that we can deal with global constraints.
\item Boundary condition:
given a partial tiling of $G_{m,n}$, we can ask to extend it to an $L$-dappled tiling.
A typical example is the case where the tiles at the boundary are specified.
In the cyclic setting, it is not even trivial to determine if there is a solution or not.
\begin{example}
Consider a $4\times 4$-grid with $\bar{L}=\{\bar{H}^2_0, \bar{V}^2_1\}, T=\{0,1\}$ and the following partial tiling:
\[
\begin{matrix}
1 & ? & ? & ?\\
? & 0 & ? & ?\\
? & 0 & 1 & 1 \\
? & 0 & ? & ?
\end{matrix}
\]
There exists no cyclically $\bar{L}$-dappled tiling extending 
(obtained by filling the cells marked with ``$?$'')
the given one. 
This is because in a $4\times4$ cyclically $\bar{L}$-dappled tiling,
there should be an equal number of $0$ and $1$.
This implies there should be exactly two $1$'s in each column,
which is not the case with the above example.

For a larger board $G_{m,n}$, where $m\ge 7$, $n \ge 4$, and $m-1$ is divisible by $3$,
we have a similar example:
\[
\begin{matrix}
\cdots ? & 0 & ? & ? & 0 & ? \cdots \\
\cdots ? & 0 & ? & ? & 0 & ? \cdots \\
\cdots ? & 0 & 1 & 1 & 0 & ? \cdots \\
\cdots ? & 0 & ? & ? & 0 & ? \cdots \\
\cdots ? & 0 & ? & ? & 0 & ? \cdots \\
\cdots ? & 0 & 1 & 1 & 0 & ? \cdots \\
\cdots ? & 0 & ? & ? & 0 & ? \cdots \\
\cdots ? & 0 & ? & ? & 0 & ? \cdots \\
\cdots ? & 0 & 1 & 1 & 0 & ? \cdots \\
\cdots ? & 0 & ? & ? & 0 & ? \cdots \\
\cdots ? & 0 & ? & ? & 0 & ? \cdots \\
\cdots ? & 0 & 1 & 1 & 0 & ? \cdots \\
\cdots ? & 0 & ? & ? & 0 & ? \cdots 
\end{matrix}
\]
There exists no cyclically $L$-dappled tiling extending it.
This can be checked, for example, by choosing a tile for $(3,3)$ and
continue filling cells which are forced to have either $0$ or $1$ by the conditions.
No matter what tile we choose for $(3,3)$, we encounter violation at some point.

We would like to have a more efficient algorithm to decide and solve tiling problems with boundary conditions.
\end{example}
\item Interpretation as a SAT problem: the $L$-dappled tiling is 
a satisfiability problem and it would be interesting to formalise it
to give a formal verification of the algorithm.
\end{enumerate}

\section*{Acknowledgements}
A part of this work was conducted during
the IMI Short Term Research project ``Formalisation of Wang tiles for texture synthesis'' at Kyushu University.
The authors thank Kyushu University for the support.
The authors are grateful to Yoshihiro Mizoguchi for his helpful comments.

\end{document}